\documentclass[letterpaper, 10 pt, conference]{ieeeconf}  % Comment this line out
\IEEEoverridecommandlockouts                              % This command is only
\usepackage{cite}
\usepackage{algpseudocode}
\usepackage{graphicx}
\usepackage{amsmath}
\usepackage{array}
\usepackage{caption}
\usepackage{subcaption}

\usepackage{comment}
\usepackage[font=scriptsize]{caption}
\usepackage{url}
\usepackage{txfonts}

\newtheorem{lemma}{Lemma}[section]
\newtheorem{proposition}{Proposition}[section]

\newtheorem{definition}{Definition}[section]

\usepackage{tikz}
\usetikzlibrary{arrows.meta, positioning}
\usetikzlibrary{calc}

\title{ \LARGE \bf Effect of Graph Gluing on Consensus in Networked \\ Multi-Agent Systems }

\author{Rohollah Moghadam,~\IEEEmembership{Senior Member,~IEEE} and Santosh Kandel
\thanks{Rohollah Moghadam is with the Department
of Electrical and Electronic Engineering, California State University-Sacramento,
5028 Riverside Hall, 6000 J Street, Sacramento, CA, 95819, USA {\tt\small moghadam@csus.edu}}% <-this % stops a space
\thanks{Santosh Kandel is with the Department
of Mathematics and Statistics, California State University-Sacramento,
226 Brighton Hall, 6000 J Street, Sacramento, CA, 95819, USA {\tt\small kandel@csus.edu}}% <-this % stops a space
}

\begin{document}

\maketitle
\thispagestyle{empty}
\pagestyle{empty}

\begin{abstract}
In this paper, the effects of graph gluing operations in networks of multi-agent systems and their impact on system performance are investigated. In many practical applications, multiple multi-agent subsystems must be interconnected through communication links to accomplish complex tasks, resulting in a larger communication network. Such interconnections modify the underlying graph topology and consequently affect the consensus behavior and convergence rate of the network. In particular, this paper examines both bridge gluing and interface gluing and analyzes how the number and structure of communication links between subsystems influence the Fiedler eigenvalue of the resulting graph. Since the Fiedler eigenvalue is directly related to the convergence rate of consensus dynamics, the proposed analysis establishes a clear relationship between interconnection strategies, algebraic connectivity, and system performance. The results provide theoretical insight into how different gluing mechanisms alter the spectral properties of the graph Laplacian and, in turn, the convergence characteristics of the networked multi-agent system. Simulation studies are presented to illustrate the theoretical findings and to validate the effectiveness of the proposed framework.
\end{abstract}
%\begin{IEEEkeywords}
%Internal Model Principle; Multi-Agent Systems; Cyber Physical Systems Attack; Consensus; Networked %Control Systems
%\end{IEEEkeywords}
\section{INTRODUCTION}
Distributed control of multi-agent cyber-physical systems has received considerable attention in recent years due to its wide range of applications. In these systems, multiple autonomous agents interact through a communication network and utilize locally available information from neighboring agents to cooperatively achieve a global objective \cite{Olfati2007,Fax2004,Jadbabaie2003}. A key advantage of distributed control is that each agent relies only on local information, which improves scalability, robustness, and flexibility in large-scale systems. Multi-agent systems have been applied in many areas, including coverage, where agents coordinate to monitor large environments, formation, where agents maintain desired geometric configurations while moving and surveillance, search and rescue, and target tracking, where teams of robots or drones collaborate to explore environments and collect information. These capabilities make distributed coordination strategies fundamental to many modern cyber-physical systems. 

As the range of applications of multi-agent systems continues to grow \cite{Chai2025}, multiple multi-agent subsystems are often interconnected to accomplish more complex tasks and improve overall system performance. For instance, in search-and-rescue operations \cite{Cao2024,Ge2026}, a swarm of aerial drones may form one multi-agent subsystem responsible for rapid area exploration, while a team of unmanned ground vehicles (UGVs) forms another subsystem responsible for ground-level inspection and target tracking. By connecting these subsystems through communication links, the overall system can coordinate information sharing and decision-making, enabling faster detection, improved analysis, and more efficient tracking of targets.

Although multi-agent systems have been extensively studied in the literature, most existing works focus on coordination and control within a single network of agents. To the best of our knowledge, limited attention has been given to the problem of interconnecting multiple multi-agent subsystems to form a larger network of networks. In many emerging applications, groups of multi-agent systems must cooperate and exchange information to accomplish complex missions. However, connecting multiple multi-agent systems fundamentally changes the underlying communication topology, resulting in a larger and more complex graph structure. This interconnection affects the spectral properties of the overall graph and consequently influences the consensus behavior and convergence characteristics of the network, since all agents contribute to the dynamics of the combined system. Therefore, understanding how the interconnection of multi-agent subsystems impacts consensus performance is essential for the design and operation of large-scale networked multi-agent systems.

% The idea of ``cutting” and   ``gluing”, which is similar to the divide and conquer strategy, is widely used in Mathematics and Statistics \cite{reshetikhin2010lectures, Kandel2021, ng2001}. Recently, some literature have explored applications of the cutting and gluing approach to study problems in Combinatorics \cite{contreras2020, contreras2024}. For example, in \cite{contreras2024}, the effect of gluing on so-called Feynman diagrams is explored. In \cite{contreras2020}, the effect of gluing on the characteristics polynomial of the graph Laplacian is studied. Recently, in \cite{brar2025gluing}, a gluing formula for the pseudo-determinant of the graph Laplacian is derived and its application on counting spanning trees on undirected graphs is explored.  

The idea of “cutting” and “gluing,” similar to the divide and conquer paradigm, is widely used in mathematics and statistics \cite{reshetikhin2010lectures, Kandel2021, ng2001}. Some recent works have explored applications of this approach to problems in combinatorics \cite{contreras2020, contreras2024}. For instance, \cite{contreras2024} examines the effect of gluing on so-called Feynman diagrams associated with combinatorial quantum field theory, while \cite{contreras2020} studies its impact on the characteristic polynomial of the graph Laplacian. More recently, \cite{brar2025gluing} derives a gluing formula for the pseudo-determinant of the graph Laplacian and applies it to the enumeration of spanning trees in undirected graphs.

In this paper, we investigate the effect of graph gluing, where multiple multi-agent system graphs are connected through communication links, on the convergence and consensus time of the resulting network. We particularly analyze the effect of graph gluing on the Fiedler eigenvalue of the overall graph that is directly related to the convergence time. The results provide insight into how the interconnection structure influences system performance, which is critical for the reliable operation of large-scale networked multi-agent systems and mission-critical applications.

 %%%%%%%%%%%%%%%%%%%%%%%%%%%%%%%% Section 2 %%%%%%%%%%%%%%%%%%%%%%%%%%%%%%%%%%%%%%%%%%%%%%%%%%%%%%%%%
\section{PRELIMINARY}
In this section, some background of graph theory and some definitions are provided.

\subsection{Basic Definitions and Notations from Graph Theory} 

An undirected graph $\mathcal{G}$ with $N$ nodes, consists of a pair $\left( {\mathcal{V}(\mathcal{G}),\mathcal{E}(\mathcal{G})} \right)$ in which $\mathcal{V}(\mathcal{G}){\text{ =  }}\{ {v_1}, \cdots ,{v_N}\} $ is a set of nodes and  $\mathcal{E}(\mathcal{G})\subseteq \mathcal{V}(\mathcal{G}) \times \mathcal{V}(\mathcal{G})$ is a set of edges. The adjacency matrix is defined as $A = \left[ {{a_{ij}}} \right]$, with ${a_{ij}}=1$ if $({v_j},{v_i}) \in \mathcal{E}(\mathcal{G})$, and ${a_{ij}} = 0$, otherwise. The nodes $i$ and $j$ are adjacent, if there is an edge between them. A path from node $i$ to node $j$ is a sequence of distinct nodes starting with node $i$ and ending with node $j$ while consecutive nodes are adjacent. The graph $\mathcal{G}$ is called connected, if there exists a path between any two nodes of the graph $\mathcal{G}$. The degree matrix $D \in {\mathbb{R}^{N \times N}}$ is a diagonal matrix with diagonal elements ${d_i}=\#\{j:(j,i) \in \mathcal{E}(\mathcal{G})\}$. The graph Laplacian matrix of $\mathcal{G}$ is defined as ${L}=D-A$ and it is a symmetric matrix for undirected graphs. For an undirected and connected graph $\mathcal{G}$ with $N$ vertices, $0$ is a simple eigenvalue and all eigenvalues can be arranged in an increasing order as $0=\lambda_1<\lambda_2<\dots<\lambda_N$. The eigenvector corresponding to zero eigenvalue is ${{\mathbf{1}}_N}$. 

\section{PROBLEM FORMULATION}
Consider a network of $N$ agents described by a communication graph 
$\mathcal{G} = (\mathcal{V}, \mathcal{E})$. Here, it is assumed that each agent is modeled by first-order dynamics
\begin{equation}
\dot{x}_i(t) = u_i(t), \quad i = 1,\dots,N,
\end{equation}
where $x_i(t) \in \mathbb{R}$ is the state of agent $i$. The control input is 
designed using a distributed consensus protocol
\begin{equation}
u_i(t) = \sum_{j \in \mathcal{N}_i} a_{ij} \big(x_j(t) - x_i(t)\big),
\end{equation}
which leads to the global dynamics
\begin{equation}
\dot{x}(t) = -L x(t).
\end{equation}

The objective is to achieve consensus, i.e.,
\begin{equation}
\lim_{t \to \infty} x_i(t) = x_j(t), \quad \forall i,j \in \mathcal{V}.
\end{equation}
It is well known \cite{lewis2013cooperative} that the convergence rate is governed by the Fiedler eigenvalue $\lambda_2(L)$, with the consensus time constant satisfying $\tau \propto 1/\lambda_2(L)$.

In this paper, we consider a multi-agent system formed by interconnecting 
subgraphs through graph gluing operations (e.g., bridge and interface gluing). 
Let $\mathcal{G}_1$ and $\mathcal{G}_2$ be two connected graphs and 
$\mathcal{G}$ denote the resulting graph after gluing. The main objective is to analyze how the interconnection structure affects 
the Laplacian spectrum, particularly the Fiedler eigenvalue $\lambda_2(L)$, 
and consequently the convergence behavior of the overall multi-agent system.
\medskip

\noindent
\textbf{Remark 1.} It is worth noting that the agents may have general linear or nonlinear dynamics. The analysis presented in this paper primarily focuses on the influence of the communication graph topology. For simplicity and clarity of exposition, we adopt single-integrator dynamics, however, this assumption is not restrictive, as the key results depend on graph-theoretic properties and can be extended to more general agent dynamics.

\section{GRAPH GLUING AND THEIR IMPACT ON ALGEBRAIC CONNECTIVITY}
In this section, we consider the gluing operations between graphs and study their effect on the algebraic connectivity of graphs. Two distinct scenarios as bridge gluing and interface gluing are explored. 

Let $\mathcal{G}$ be an undirected graph. In \cite{fiedler1973}, Fiedler proved that $\mathcal{G}$ is connected if and only if $\lambda_2(\mathcal{G})>0$. Fiedler used the eigenvalue $\lambda_2(\mathcal{G})$ to define the algebraic connectivity of the graph.  There is a large amount of literature that are devoted to find bounds for $\lambda_2(\mathcal{G})$, please refer to \cite{wu2007} and references there in for more details.

\subsection{Gluing Operations on Graphs}
Let us introduce two different gluing operations.

\begin{definition}[Bridge Gluing \cite{contreras2020}] Let $\mathcal{G}_1$ and $\mathcal{G}_2$ be two graphs; $v_1^{(1)}, \ldots,  v_k^{(1)}$ be vertices in $\mathcal{G}_1$; and $v_1^{(2)}, \ldots,  v_k^{(2)}$ be vertices in $\mathcal{G}_2$. A bridge $B$ is a graph with vertices 
$\left\{v_1^{(1)}, \ldots,  v_k^{(1)}, \,v_1^{(2)}, \ldots,  v_k^{(2)}\right\}$
 and  $\mathcal{E}(B)= \left\{(v_i^{(1)}, v_i^{(2)})\,\big| 1\leq i\leq k\right\}$.  The bridge gluing of $\mathcal{G}_1$ and $\mathcal{G}_2$ along the bridge $B$ is a graph denoted by $\mathcal{G}_1\sqcup_{B}\mathcal{G}_2$ and defined by 
 \begin{align*}
& \mathcal{V}(\mathcal{G}_1\sqcup_B\mathcal{G}_2) = \mathcal{V}(\mathcal{G}_1)\cup \mathcal{V}(\mathcal{G}_2) \\
& \mathcal{E}(\mathcal{G}_1\sqcup_B\mathcal{G}_2) = \mathcal{E}(\mathcal{G}_1) \cup \mathcal{E}(B) \cup \mathcal{E}(\mathcal{G}_2)
 \end{align*}
 \end{definition}

\medskip 
\begin{definition}[Interface Gluing \cite{contreras2020}]
Let $\mathcal{G}_1$ and $\mathcal{G}_2$ be two graphs and $Y$ be a {full subgraph} of $\mathcal{G}_1$ and $\mathcal{G}_2$. Here, a full subgraph means if there is an edge joining two vertices in the graph, then it is also an edge in the subgraph. Then, we can construct a new graph $\mathcal{G} = \mathcal{G}_1\cup_{Y} \mathcal{G}_2$ by identifying the vertices and edges present in $Y$: $\mathcal{V}(\mathcal{G}_1 \cup_{Y} \mathcal{G}_2) = \mathcal{V}(\mathcal{G}_1)\sqcup \mathcal{V}(\mathcal{G}_2)/\sim $ and $\mathcal{E}(\mathcal{G}_1 \cup_{Y} \mathcal{G}_2) = \mathcal{E}(\mathcal{G}_1) \sqcup (\mathcal{E}(\mathcal{G}_2)/\sim $. Here, $v^{(1)} \in \mathcal{V}(\mathcal{G}_1)$ is identified with $v^{(2)} \in \mathcal{V}(\mathcal{G}_2)$, i.e, $v^{(1)}\sim v^{(2)}$ if they represent the same vertex in $Y$.  Similarly, $e^{(1)} \in \mathcal{E}(\mathcal{G}_1)$ is identified with $e^{(2)} \in \mathcal{E}(\mathcal{G}_2)$, i.e., $e^{(1)}\sim e^{(2)}$ if they represent the same edge in $Y$.  In this case,
 \begin{align*}
& \mathcal{V}(\mathcal{G}_1\cup_Y\mathcal{G}_2) = \left(\mathcal{V}(\mathcal{G}_1)\setminus \mathcal{V}(Y)\right) \cup  \left(\mathcal{V}(\mathcal{G}_2)\setminus \mathcal{V}(Y)\right) \cup \mathcal{V}(Y)\\
& \mathcal{E}(\mathcal{G}_1\cup_Y\mathcal{G}_2) = \left(\mathcal{E}(\mathcal{G}_1)\setminus \mathcal{E}(Y)\right) \cup \left(\mathcal{E}(\mathcal{G}_2)\setminus \mathcal{E}(Y)\right)\cup \mathcal{E}(Y)
 \end{align*}
 \end{definition}

\subsection{Bridge Gluing of Graphs and Bounds on Fiedler's Eigenvalues}
In this subsection, bounds on $\lambda_2(G)$ are derived for $\mathcal{G} = \mathcal{G}_1\sqcup_B \mathcal{G}_2$. In particular, an upper bound is obtained using the notion of the graph cut. Let $S_1$ and $S_2$ be subset of $\mathcal{V}(\mathcal{G})$. Then, the cut of $S_1$ and $S_2$ is defined as 
\begin{equation*}
\mathrm{cut}(S_1, S_2) = \sum_{u \in S_1, v\in S_2}a_{uv}
\end{equation*} 
where $a_{uv}$ is the component of the adjacency matrix associated to the edge $\{u,\,v\}$ joining $u$ to $v$.  For an unweighted graph,  $\mathrm{cut}(S_1, S_2)$ is simply the total number of edges joining vertices in $S_1$  to vertices in $S_2$.  The following lemma gives an upper bound on $\lambda_2(\mathcal{G})$ of  an undirected graph. 

\begin{lemma}[\cite{wu2007}]\label{lemma:wu2007} Let $S_1$ and $S_2$ be two nontrivial disjoint subsets of $\mathcal{V}(\mathcal{G})$ of a graph $\mathcal{G}$. Let $S_i^c = \mathcal{V}(\mathcal{G})\setminus S_i, \, i=1,2$.  Then,
\begin{equation}
\lambda_2(\mathcal{G}) \leq \dfrac{\mathrm{cut}(S_1, S_1^c)}{|S_1|} + \dfrac{\mathrm{cut}(S_2, S_2^c)}{|S_2|} 
\end{equation}
\end{lemma}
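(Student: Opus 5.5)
The plan is to use the Courant--Fischer (Rayleigh quotient) characterization
\begin{equation*}
\lambda_2(\mathcal{G}) = \min_{x\neq 0,\; x\perp \mathbf{1}_N} \frac{x^T L x}{x^T x}, \qquad x^TLx=\sum_{\{u,v\}\in\mathcal{E}(\mathcal{G})} a_{uv}(x_u-x_v)^2 ,
\end{equation*}
and evaluate the Rayleigh quotient at a test vector built from the indicator vectors $\mathbf{1}_{S_1}$ and $\mathbf{1}_{S_2}$. Since $S_1$ and $S_2$ are disjoint and nonempty, the natural candidate is
\begin{equation*}
x = \frac{1}{|S_1|}\mathbf{1}_{S_1} - \frac{1}{|S_2|}\mathbf{1}_{S_2}.
\end{equation*}
It satisfies $\mathbf{1}_N^T x = 1-1 = 0$, so $x$ is admissible. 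Its norm is $x^Tx = \frac{1}{|S_1|}+\frac{1}{|S_2|}$.

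The next step is to compute $x^TLx$ edge by edge, separating the edges into types.
\begin{itemize}
\item Edges inside $S_1$, inside $S_2$, or inside $R=\mathcal{V}\setminus(S_1\cup S_2)$ contribute $0$.
\item An edge between $S_1$ and $R$ contributes $1/|S_1|^2$.
\item An edge between $S_2$ and $R$ contributes $1/|S_2|^2$.
\item An edge between $S_1$ and $S_2$ contributes $\left(\frac{1}{|S_1|}+\frac{1}{|S_2|}\right)^2$.
\end{itemize}
Write $c_{12}=\mathrm{cut}(S_1,S_2)$, $c_{1R}=\mathrm{cut}(S_1,R)$ and $c_{2R}=\mathrm{cut}(S_2,R)$. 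Then $\mathrm{cut}(S_1,S_1^c)=c_{12}+c_{1R}$ and $\mathrm{cut}(S_2,S_2^c)=c_{12}+c_{2R}$.

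It then remains to show
\begin{equation*}
\frac{c_{1R}}{|S_1|^2}+\frac{c_{2R}}{|S_2|^2}+c_{12}\Big(\frac{1}{|S_1|}+\frac{1}{|S_2|}\Big)^2 \le \Big(\frac{1}{|S_1|}+\frac{1}{|S_2|}\Big)\Big(\frac{c_{12}+c_{1R}}{|S_1|}+\frac{c_{12}+c_{2R}}{|S_2|}\Big).
\end{equation*}
Expanding the right-hand side, the $c_{12}$ terms match exactly. Each $c_{1R}$ and $c_{2R}$ term on the left also appears on the right, which in addition carries the nonnegative extras $\frac{c_{1R}}{|S_1||S_2|}+\frac{c_{2R}}{|S_1||S_2|}$. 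The inequality therefore holds term by term, and dividing by $x^Tx$ gives the claimed bound.

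The main obstacle is choosing weights on the two indicators that make the cross-edges between $S_1$ and $S_2$ cancel exactly rather than overshoot. With unequal weights, or with the mean-subtracted single-set vector $\mathbf{1}_{S_1}-\frac{|S_1|}{N}\mathbf{1}$, one would get a bound like $\frac{N\,\mathrm{cut}(S_1,S_1^c)}{|S_1||S_1^c|}$ instead, which has a different form. I expect the $1/|S_i|$ normalization to be the right one, and the algebraic check above confirms it.
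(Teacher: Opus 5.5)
Your proof is correct. The paper gives no proof of this lemma (it is imported from \cite{wu2007}), so your argument is a self-contained verification rather than an alternative to anything in the text.

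Each step holds. The characterization $\lambda_2(\mathcal{G})=\min_{x\neq 0,\,x\perp\mathbf{1}_N}x^TLx/x^Tx$ is valid for every undirected graph, connected or not, because $L$ is positive semidefinite with $L\mathbf{1}_N=0$. ``Nontrivial'' is only needed in the sense of ``nonempty'', so that $x$ is defined. The edge-by-edge count and the final comparison are both right.

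Your comparison actually gives more than you claim. With $a=1/|S_1|$ and $b=1/|S_2|$, the gap between the two sides is $(c_{1R}+c_{2R})ab/(a+b)$, so
\[
\lambda_2(\mathcal{G})\le \frac{\mathrm{cut}(S_1,S_1^c)}{|S_1|}+\frac{\mathrm{cut}(S_2,S_2^c)}{|S_2|}-\frac{c_{1R}+c_{2R}}{|S_1|+|S_2|}.
\]
This reduces to the lemma exactly when no edge joins $S_1\cup S_2$ to the rest of the graph. That is the situation in Proposition~\ref{Prop:Bridge Gluing}, where the right-hand side of \eqref{gluing} is precisely the Rayleigh quotient of your test vector.

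Your closing remark about choosing the weights overstates the difficulty. Inside $\mathrm{span}\{\mathbf{1}_{S_1},\mathbf{1}_{S_2}\}$, orthogonality to $\mathbf{1}_N$ already forces $x$ to be a multiple of your vector.

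If you want to stay within the paper's own toolkit, you could instead apply the two-dimensional min--max formula from the proof of Proposition~\ref{Prop:Interface Gluing} to $V=\mathrm{span}\{\mathbf{1}_{S_1},\mathbf{1}_{S_2}\}$. This space has dimension $2$ because the sets are disjoint and nonempty. The supremum of the Rayleigh quotient over $V$ is the top eigenvalue of a $2\times 2$ positive semidefinite matrix whose trace is exactly the right-hand side of the lemma. That route recovers the stated bound, but not the sharper one above.
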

\medskip

This lemma immediately provides an upper bound for $\lambda_2(\mathcal{G})$ whenever,  $\mathcal{G}= \mathcal{G}_1 \sqcup_B \mathcal{G}_2$.

\begin{proposition}\label{Prop:Bridge Gluing} Let $\mathcal{G} = \mathcal{G}_1 \sqcup_B \mathcal{G}_2$.  Assume that $B$ has k-edges.  Then,

\begin{equation}\label{gluing}
\lambda_2(\mathcal{G}) \leq \dfrac{k}{|\mathcal{V}(\mathcal{G}_1)|} + \dfrac{k}{|\mathcal{V}(\mathcal{G}_2)|}  
\end{equation}

\begin{proof}
Using $S_1 = \mathcal{V}(\mathcal{G}_1)$ and $S_2 = \mathcal{V}(\mathcal{G}_2)$ in Lemma \ref{lemma:wu2007} and observing $\mathrm{cut}(S_1, S_1^c)= \mathrm{cut}(S_2, S_2^c) = k$,  we get
\[
\lambda_2(\mathcal{G}) \leq  \dfrac{k}{|\mathcal{V}(\mathcal{G}_1)|} + \dfrac{k}{\mathcal{V}(\mathcal{G}_2)|}
\]
\end{proof}
\end{proposition}
\medskip

From Proposition \ref{Prop:Bridge Gluing}, it follows that  $\lambda_2(\mathcal{G})$ becomes smaller for smaller values of $k$, which is compatible with the interpretation of $\lambda_2(\mathcal{G})$ as it contains information on the connectivity of the graph. Geometrically, for smaller values of $k$ the graph is ``less`` connected and the Fiedler's eigenvalue is expected to be smaller. In addition, from Proposition \ref{Prop:Bridge Gluing}, it follows that if the number of vertices on $\mathcal{G}_1$ or $\mathcal{G}_2$ are increased, then, $\lambda_2(\mathcal{G})$ decreases.

It is known that the consensus time constant is inversely proportional to the Fiedler eigenvalue, i.e., $\tau=1/\lambda_2$ \cite{lewis2013cooperative}. From \eqref{gluing}, as the number of bridge edges $k$ increases, the Fiedler eigenvalue increases, resulting in a faster convergence rate.
\medskip

\subsection{Interface Gluing of Graphs and Bounds of Fiedler's Eigenvalue}
In this subsection, the effect of interface gluing on the Fiedler's eigenvalue is analyzed. Let $\mathcal{G} = \mathcal{G}_1 \cup_Y \mathcal{G}_2$. For each $ i \in \{1,2\}$, the Laplacian $L(\mathcal{G}_i)$ is expressed in block matrix form with respect to the interface set $Y$:
\[L(\mathcal{G}_i)
 = \begin{bmatrix}
      A_i& B_i \\[1mm]
         B_i^{T}& D_i
 \end{bmatrix},
\] where the matrices $D_i$ are indexed by vertices of $Y$, and the matrices $A_i$ are indexed by the vertices that are not in $Y$.  It can be shown that $A_1$ and $A_2$ are symmetric positive definite matrices \cite{contreras2024}. Let $\lambda_1(A_1)$ and $\lambda_1(A_2)$ be the smallest eigenvalues of $A_1$ and $A_2$ respectively. We can use these eigenvalues to find an upper bound for $\lambda_2(\mathcal{G})$. This is the content of the following proposition. 

\begin{proposition}\label{Prop:Interface Gluing} The following holds:
\[
\lambda_2(\mathcal{G}) \leq {\max}\{\lambda_1(A_1), \lambda_1(A_2)\}
\]

\begin{proof} The proof of this proposition is based on the application of Poincare Min--Max Principle for eigenvalues of a symmetric matrix. This proof is similar to that of \cite{Benson2015} on a  Riemannian manifold and \cite{balti2017} for the so-called special Laplacian on a directed graph. 

By Poincare Min-Max Principle for eigenvalue, we know that 
\begin{equation}\label{ineq: Poincare Min Max}
\lambda_2(\mathcal{G})  =\inf\limits_{\{V|\,\mathrm{dim}(V)\, =\,2\}} \sup\limits_{ v \in V\setminus \{0\}} \dfrac{v^T L(\mathcal{G}) v}{v^Tv}
\end{equation}

Let $v_1$ be eigenvector of $A_1$ associated to $\lambda_1(A_1)$  and $v_1$ be eigenvector of $A_2$ associated to $\lambda_1(A_2)$.   Define $\widetilde{v_1} = \begin{bmatrix}v_1 \\ \boldsymbol{0} \end{bmatrix}$ and $\widetilde{v_2} = \begin{bmatrix}\boldsymbol{0} \\  v_2 \end{bmatrix}$ so that $\widetilde{v_1}, \widetilde{v_2} $ are in $\mathbb{R}^{\mathcal{V}(\mathcal{G})}$. Note that $\widetilde{v_1}, \widetilde{v_2} $ are linearly independent by construction.  Let $V = \mathrm{span}\{\widetilde{v_1}, \widetilde{v_2}\}$.  Furthermore,  for any $v\in V$,  we can find unique $u_1  = c_1 v_1$ and $u_2  = c_2 v_2$ such that $v^T L(\mathcal{G}) v = u_1^TA_1 u_1 + u_2^TA_2 u_2$  and $v^Tv = u_1^T u_1 + u_2^T u_2$.  Moreover,
\begin{align*}
v^T L(\mathcal{G}) v &= u_1^TA_1 u_1 + u_2^TA_2 u_2\\
 & = \lambda_1(A_1)u_1^Tu_1 + \lambda_1(A_2)u_2^Tu_2 \\
 & \leq {\max}\{\lambda_1(A_1), \lambda_1(A_2)\} (u_1^T u_1 + u_2^T u_2)\\
 & = {\max}\{\lambda_1(A_1), \lambda_1(A_2)\}  v^Tv
\end{align*}
Now, from (\ref{ineq: Poincare Min Max}) it follows that 
\[
\lambda_2(\mathcal{G}) \leq \sup\limits_{ v \in V\setminus \{0\}} \dfrac{v^T L(\mathcal{G}) v}{v^Tv}
  \leq  {\max}\{\lambda_1(A_1), \lambda_1(A_2)\}. 
\]
\end{proof} 
\end{proposition}

The inequality in the Proposition \ref{Prop:Interface Gluing} is optimal in the sense that equality can be achieved in some cases. Let $\mathcal{G}_1$ be the path graph with two vertices and $\mathcal{G}_2$ is also a path graph with two vertices.  If $\mathcal{G}_1$ and $\mathcal{G}_2$ are glued along a common vertex,  then,  $\lambda_1(A_1) = 1$,  $\lambda_1(A_2) =1$ and $\lambda_2(\mathcal{G}) = 1$. 

\smallskip

\noindent
\textbf{Remark 2.} The concept of graph gluing is particularly relevant to the cybersecurity of multi-agent systems \cite{Rohollah2017CDC,Zhang2021_Survey}, because when a compromised multi-agent system is connected to another multi-agent network, malicious effects propagates to the previously intact network and continue to influence its performance even if the two networks are later disconnected. Moreover, an attacker can target the communication links between the two graphs, effectively reducing the number of interconnecting edges $k$ in \eqref{gluing}. This leads to a decrease in the Fiedler eigenvalue and, consequently, a slower convergence of the overall network. 
\medskip

\noindent
\textbf{Remark 3.} The proposed framework is also applicable to the decomposition of large-scale graphs into smaller components. Specifically, it enables the identification of critical edges or nodes whose removal leads to graph disconnection, and provides insight into how the resulting subgraphs evolve toward potentially different consensus states for the smaller graphs.
\medskip

\noindent
\textbf{Remark 4.} The proposed framework is particularly relevant to multi-robot collaboration, where heterogeneous teams of unmanned aerial vehicles (UAVs) and unmanned ground vehicles (UGVs) can dynamically interconnect, i.e. via wireless communication, to form larger networked multi-agent systems. These interconnections enhance coverage, information sharing, and robustness, thereby improving performance in applications such as search and rescue and target tracking. Consequently, analyzing how such graph combinations affect connectivity and consensus properties is crucial for the design of reliable cooperative control strategies.
\medskip

\noindent
\textbf{Remark 5.} The results in Proposition \ref{Prop:Bridge Gluing} and Proposition \ref{Prop:Interface Gluing} can be extended to directed graph. In this case, since the Laplacian matrix is generally nonsymmetric, it is more appropriate to analyze either the real part of the Fiedler eigenvalue or the algebraic connectivity of the directed graph \cite{wu2007}.
\medskip

\noindent
\textbf{Remark 6.} Interface gluing offers several practical advantages in networked multi-agent systems. Unlike bridge gluing, it does not require additional communication links between subsystems, as the interconnection is achieved through shared interface nodes. This reduces communication overhead and simplifies implementation. Furthermore, the shared interface nodes can provide inherent redundancy in the event of failures or adversarial disruptions (e.g., cyber-attacks) affecting certain agents, these nodes can be replaced or supported by backup agents within the interface set. This enhances the resilience and robustness of the overall network.

\section{SIMULATION RESULTS}
In this section, simulation examples are provided to verify the effectiveness of the proposed gluing approach in networked multi-agent systems. 

\subsection{Example 1 (Bridge gluing in multi-agent systems with small graphs)}
Consider two multi-agent systems with single integrator dynamics $\dot{x}_i = u_i$ and the distributed controller $u_i=\sum (x_i-x_j)$ are communicating over a graph topology depicted in Fig. \ref{fig:two-graphs}. The initial conditions of agents in Graph 1 are $x_1(0)=5,x_2(0)=-3,x_3(0)=4$ and for Graph 2 are $x_4(0)=-5,x_5(0)=2,x_6(0)=-1$.
\begin{figure}[h]
\centering
\begin{tikzpicture}[
    every node/.style = {
        circle, draw,
        line width=2pt,
        minimum size = 0.65cm,
        font = \bfseries
    }
]

%% --- Red Triangle (1,2,3) ---
\node[draw=red, text=red] (1) at (0, 2)    {1};
\node[draw=red, text=red] (2) at (-1, 0)   {2};
\node[draw=red, text=red] (3) at (1, 0)    {3};

\draw[red, line width=2pt] (1) -- (2);
\draw[red, line width=2pt] (2) -- (3);
%\draw[red, line width=2pt] (3) -- (1);

%% --- Green Triangle (4,5,6) ---
\node[draw=green!60!black, text=green!60!black] (4) at (3.5, 2)  {4};
\node[draw=green!60!black, text=green!60!black] (5) at (2.5, 0)  {5};
\node[draw=green!60!black, text=green!60!black] (6) at (4.5, 0)  {6};

\draw[green!60!black, line width=2pt] (4) -- (5);
\draw[green!60!black, line width=2pt] (5) -- (6);
%\draw[green!60!black, line width=2pt] (6) -- (4);

\end{tikzpicture}
\caption{Communication topology of two individual graphs.}
\label{fig:two-graphs}
\end{figure}

The graph Laplacian matrix associated to each graph is
\begin{equation}
    L_1 = \left[ {\begin{array}{*{20}{c}}
  1&{ - 1}&0 \\ 
  { - 1}&2&{ - 1}\\ 
  0&{ - 1}&1
\end{array}} \right], \quad L_2 = \left[ {\begin{array}{*{20}{c}}
  1&{ - 1}&0 \\ 
  { - 1}&2&{ - 1}\\ 
  0&{ - 1}&1
\end{array}} \right]
\end{equation}

\begin{figure}[h]
\centering
\begin{tikzpicture}[
    every node/.style = {
        circle, draw,
        line width=2pt,
        minimum size = 0.65cm,
        font = \bfseries
    }
]

%% --- Red Triangle (1,2,3) ---
\node[draw=red, text=red] (1) at (0, 2)    {1};
\node[draw=red, text=red] (2) at (-1, 0)   {2};
\node[draw=red, text=red] (3) at (1, 0)    {3};

\draw[red, line width=2pt] (1) -- (2);
\draw[red, line width=2pt] (2) -- (3);
%\draw[red, line width=2pt] (3) -- (1);

%% --- Green Triangle (4,5,6) ---
\node[draw=green!60!black, text=green!60!black] (4) at (3.5, 2)  {4};
\node[draw=green!60!black, text=green!60!black] (5) at (2.5, 0)  {5};
\node[draw=green!60!black, text=green!60!black] (6) at (4.5, 0)  {6};

\draw[green!60!black, line width=2pt] (4) -- (5);
\draw[green!60!black, line width=2pt] (5) -- (6);
%\draw[green!60!black, line width=2pt] (6) -- (4);

%% --- Red connecting line from 1 to 4 ---
\draw[blue, line width=2pt] (1) -- (4);

\end{tikzpicture}
\caption{The graph gluing with one bridge.}
\label{fig:one-bridge}
\end{figure}

The second eigenvalue of each graph is $\lambda_2=1$. Now gluing Graph 1 and Graph 2 with a single bridge, i.e. $k=1$ in \eqref{gluing}, between node 1 and 4 as shown in Fig. \ref{fig:one-bridge} results in the following graph Laplacian matrix 
\begin{equation}
    L_{12}^1 = \left[ {\begin{array}{*{20}{c}}
  2&{ - 1}&0 &-1 &0 &0 \\ 
  { - 1}&2&{ - 1}&0 & 0&0\\ 
  0&{ - 1}&1 & 0& 0&0\\
  -1&0&0&2 &-1 &0\\
  0&0&0 & -1&1 &0\\
  0&0&0 &0 & -1&1\\
\end{array}} \right]
\end{equation}

\begin{figure}[!ht]
\begin{center}
\includegraphics[scale=0.45]{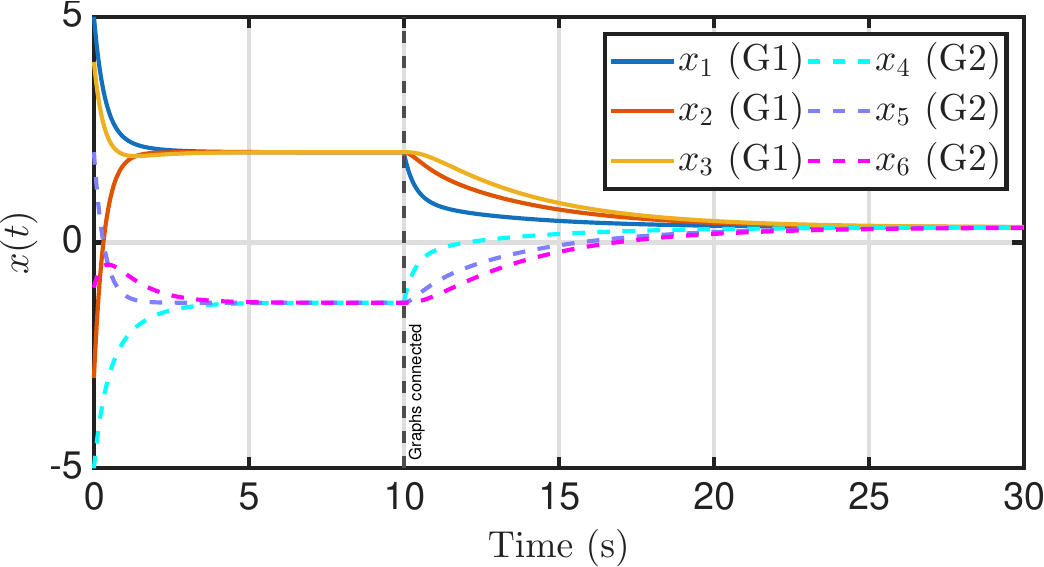}
\vspace{-5pt}
\caption{The effect of single bridge edge on the convergence rate of networked multi-agent systems.}\label{one_bridgeEX1}
\captionsetup{justification=centering}
\vspace{-5pt}
\end{center}
\end{figure}

It follows that the Fiedler eigenvalue, namely the second smallest eigenvalue of the Laplacian matrix associated with the combined graph, is $\lambda_2 (L_{12}) = 0.382$. This confirms the results of Proposition \ref{Prop:Bridge Gluing} that $\lambda_2 (L_{12}^1) \leq (1/3+1/3) \leq 0.67$. The consensus time of Graph 1, Graph 2 and the combined graph is shown in Fig. \ref{one_bridgeEX1}.

\begin{figure}[h]
\centering
\begin{tikzpicture}[
    every node/.style = {
        circle, draw,
        line width=2pt,
        minimum size = 0.65cm,
        font = \bfseries
    }
]

%% --- Red Triangle (1,2,3) ---
\node[draw=red, text=red] (1) at (-1, 2)    {1};
\node[draw=red, text=red] (2) at (-2, 0)   {2};
\node[draw=red, text=red] (3) at (0, 0)    {3};

\draw[red, line width=2pt] (1) -- (2);
\draw[red, line width=2pt] (2) -- (3);
%\draw[red, line width=2pt] (3) -- (1);

%% --- Green Triangle (4,5,6) ---
\node[draw=green!60!black, text=green!60!black] (4) at (3.5, 2)  {4};
\node[draw=green!60!black, text=green!60!black] (5) at (2.5, 0)  {5};
\node[draw=green!60!black, text=green!60!black] (6) at (4.5, 0)  {6};

\draw[green!60!black, line width=2pt] (4) -- (5);
\draw[green!60!black, line width=2pt] (5) -- (6);
%\draw[green!60!black, line width=2pt] (6) -- (4);

%% --- Red connecting line from 1 to 4 ---
\draw[blue, line width=2pt] (1) -- (4);
\draw[blue, line width=2pt] (3) -- (5);

\end{tikzpicture}
\caption{The graph gluing with two bridges.}
\label{fig:two-bridges}
\end{figure}

\begin{figure}[!ht]
\begin{center}
\includegraphics[scale=0.45]{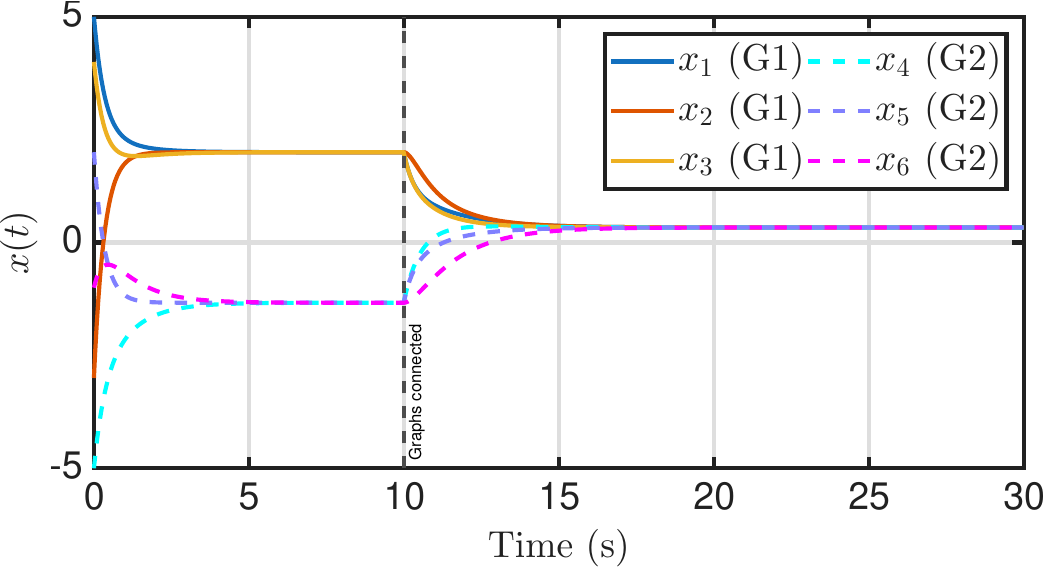}
\vspace{-5pt}
\caption{The Effect of two bridge edges on the convergence rate of graphs.}\label{twobridge_EX1}
\captionsetup{justification=centering}
\vspace{-5pt}
\end{center}
\end{figure}

Now, consider the case where two graphs are connected through two bridge edges, i.e. i.e. $k=2$ in \eqref{gluing} as shown in Fig. \ref{fig:two-bridges}. The corresponding Laplacian matrix becomes 
\begin{equation}
    L_{12}^2 = \left[ {\begin{array}{*{20}{c}}
  2&{ - 1}&0 &-1 &0 &0 \\ 
  { - 1}&2&{ - 1}&0 & 0&0\\ 
  0&{ - 1}&2 & 0& -1&0\\
  -1&0&0&2 &-1 &0\\
  0&0&-1 & -1&2 &0\\
  0&0&0 &0 & -1&1\\
\end{array}} \right]
\end{equation}

Now, the Fiedler eigenvalue of the Laplacian matrix associated with the combined graph, is $\lambda_2 (L_{12}^2) = 1$. This is consistent with the results of Proposition \ref{Prop:Bridge Gluing}, which yields the bound $\lambda_2 (L_{12}^2) \leq 2\times(1/3+1/3)= 1.34$. The consensus time of the multi-agent system communicating with Graph 1, Graph 2, and the combined graph are shown in Fig. \ref{twobridge_EX1}. As given in \cite{lewis2013cooperative}, in a multi-agent system with graph Laplacian matrix $L$, the consensus is reached with a time constant given by $\tau = 1/\lambda_2$, where $\lambda_2$ is the Fiedler eigenvalue. These results confirm that increasing the number of interconnecting edges between two graphs increases the Fiedler eigenvalue, thereby leading to faster convergence of the combined network.

\subsection{Example 2 (Bridge gluing in multi-agent systems with larger graphs)}
Now, consider two multi-agent systems with single integrator dynamics $\dot{x}_i = u_i$ and the distributed controller $u_i=\sum (x_i-x_j)$ are communicating over a larger graph topology depicted in Fig. \ref{64graphs_onebridge}. The initial conditions of agents in Graph 1 are $x^1(0)=[3,1,2,-1,2,-1]^T$ and for Graph 2 are $x^2(0)=[-2,2,-1,1]$.

\begin{figure}[h]
\centering
\scalebox{0.9}{  % <-- reduce this number to shrink more

\begin{tikzpicture}[
    every node/.style={circle, draw, line width=1.5pt,
        minimum size=0.6cm, font=\bfseries\small}
]

%% ---- Graph 1: Pentagon with center node (teal) ----
\node[fill=teal!70!blue, text=white] (1) at ($(90:1.5)  + (0,0)$)  {1};
\node[fill=teal!70!blue, text=white] (2) at ($(18:1.5)  + (0,0)$)  {2};
\node[fill=teal!70!blue, text=white] (3) at ($(306:1.5) + (0,0)$)  {3};
\node[fill=teal!70!blue, text=white] (4) at ($(234:1.5) + (0,0)$)  {4};
\node[fill=teal!70!blue, text=white] (5) at ($(162:1.5) + (0,0)$)  {5};
\node[fill=teal!70!blue, text=white] (6) at (0,0)                  {6};

\draw[teal!70!blue, line width=1.5pt]
    (1)--(2) (2)--(3) (3)--(4) (4)--(5) (5)--(1)
    (6)--(1) (6)--(3) (6)--(4);

%% ---- Graph 2: Sparse graph (gray) — shifted right ----
\node[fill=gray!25] (g1) at (3.5, 0.46)   {7};
\node[fill=gray!25] (g2) at (4.8, 1.2)   {8};
\node[fill=gray!25] (g3) at (6.1, 1.2)   {9};
\node[fill=gray!25] (g4) at (4.8,-0.2)   {10};

\draw[line width=1.5pt]
    (g1)--(g2) (g1)--(g4) (g2)--(g4) (g2)--(g3);

%% ---- Inter-graph edge: G1-node2 <--> G2-node1 (red) ----
\draw[red, line width=1.5pt] (2) -- (g1);

\end{tikzpicture}

} % end scalebox
\caption{Two connected graphs: teal (G1) and gray (G2) with inter-graph edge in red.}
\label{64graphs_onebridge}
\end{figure}

\begin{figure}[!ht]
\begin{center}
\includegraphics[scale=0.45]{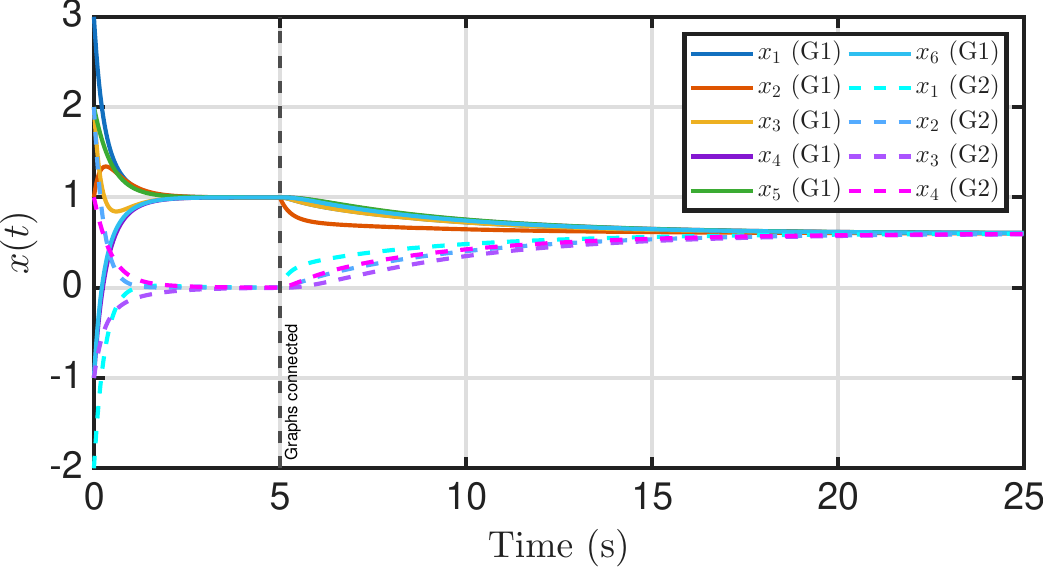}
\vspace{-5pt}
\caption{The effect of two bridge edges on the convergence rate of graphs.}\label{64_graph_onebridge}
\captionsetup{justification=centering}
\vspace{-5pt}
\end{center}
\end{figure}

The graph Laplacian matrix associated to each graph is
\begin{equation*}
    L_1 = \left[ {\begin{array}{*{20}{c}}
  3&{ - 1}&0 &0 &-1 &-1 \\ 
  { - 1}&2&{ - 1}&0 & 0&0\\ 
  0&{ - 1}&3 & -1& 0&-1\\
  0&0&-1&3 &-1 &-1\\
  -1&0&0 & -1&2 &0\\
  -1&0&-1 &-1 & 0&3\\
\end{array}} \right]
\end{equation*}

\begin{equation*}
    L_2 = \left[ {\begin{array}{*{20}{c}}
  2&{ - 1}&0 &-1 \\ 
  -1&3&-1&-1 \\ 
  0&{ - 1}&1 & 0\\
  -1&-1&0&2 \\
\end{array}} \right]
\end{equation*}

The Laplacian matrix of the combined graph with one bridge as shown in Fig. \ref{64graphs_onebridge} becomes
\[
L_{12}^1 =
\begin{bmatrix}
 3 & -1 &  0 &  0 & -1 &  -1 &  0 &  0 &  0 &  0 \\
-1 &  3 & -1 &  0 &  0 &  0 & -1 &  0 &  0 &  0 \\
 0 & -1 &  3 & -1 &  0 & -1 &  0 &  0 &  0 &  0 \\
 0 &  0 & -1 &  3 & -1 & -1 &  0 &  0 &  0 &  0 \\
-1 &  0 &  0 & -1 &  2 &  0 &  0 &  0 &  0 &  0 \\
-1 &  0 & -1 & -1 &  0 &  3 &  0 &  0 &  0 &  0 \\
 0 & -1 &  0 &  0 &  0 &  0 &  3 & -1 &  0 & -1 \\
 0 &  0 &  0 &  0 &  0 &  0 & -1 &  3 & -1 & -1 \\
 0 &  0 &  0 &  0 &  0 &  0 &  0 & -1 &  1 &  0 \\
 0 &  0 &  0 &  0 &  0 &  0 & -1 & -1 &  0 &  2
\end{bmatrix}
\]

Then, the Fiedler eigenvalue of the Laplacian matrix associated with the combined graph, is $\lambda_2 (L_{12}^1) = 0.2208$. This confirms the results of Proposition \ref{Prop:Bridge Gluing} that $\lambda_2 (L_{12}^1) \leq (1/6+1/4)= 0.4167$. For the combined graph, the corresponding convergence time constant is approximately $\tau = 1/\lambda_2=4.53$ seconds. The consensus time of multi-agent system with the communication topology given in Graph 1, Graph 2 and the combined graph is illustrated in Fig. \ref{64_graph_onebridge}.

\begin{figure}[h]
\centering
\scalebox{0.9}{  % <-- reduce this number to shrink more

\begin{tikzpicture}[
    every node/.style={circle, draw, line width=1.5pt,
        minimum size=0.6cm, font=\bfseries\small}
]

%% ---- Graph 1: Pentagon with center node (teal) ----
\node[fill=teal!70!blue, text=white] (1) at ($(90:1.5)  + (0,0)$)  {1};
\node[fill=teal!70!blue, text=white] (2) at ($(18:1.5)  + (0,0)$)  {2};
\node[fill=teal!70!blue, text=white] (3) at ($(306:1.5) + (0,0)$)  {3};
\node[fill=teal!70!blue, text=white] (4) at ($(234:1.5) + (0,0)$)  {4};
\node[fill=teal!70!blue, text=white] (5) at ($(162:1.5) + (0,0)$)  {5};
\node[fill=teal!70!blue, text=white] (6) at (0,0)                  {6};

\draw[teal!70!blue, line width=1.5pt]
    (1)--(2) (2)--(3) (3)--(4) (4)--(5) (5)--(1)
    (6)--(1) (6)--(3) (6)--(4);

%% ---- Graph 2: Sparse graph (gray) — shifted right ----
\node[fill=gray!25] (g1) at (3.5, 0.46)   {7};
\node[fill=gray!25] (g2) at (4.8, 1.2)   {8};
\node[fill=gray!25] (g3) at (6.1, 1.2)   {9};
\node[fill=gray!25] (g4) at (4.8,-0.2)   {10};

\draw[line width=1.5pt]
    (g1)--(g2) (g1)--(g4) (g2)--(g4) (g2)--(g3);

%% ---- Inter-graph edge: G1-node2 <--> G2-node1 (red) ----
\draw[red, line width=1.5pt] (2) -- (g1) (1) -- (g2) (3) -- (g4);

\end{tikzpicture}

} % end scalebox
\caption{Two connected graphs: teal (G1) and gray (G2) with inter-graph edge in red.}
\label{64_3bridges}
\end{figure}

Now, consider the case where two graphs are connected through two bridge edges, i.e. i.e. $k=3$ in \eqref{gluing} as shown in Fig. \ref{64_3bridges}. The corresponding Laplacian matrix becomes 
\[
L_{12}^3 =
\begin{bmatrix}
 4 & -1 &  0 &  0 & -1 &  -1 &  0 &  -1 &  0 &  0 \\
-1 &  3 & -1 &  0 &  0 &  0 & -1 &  0 &  0 &  0 \\
 0 & -1 &  4 & -1 &  0 & -1 &  0 &  0 &  0 &  -1 \\
 0 &  0 & -1 &  3 & -1 & -1 &  0 &  0 &  0 &  0 \\
-1 &  0 &  0 & -1 &  2 &  0 &  0 &  0 &  0 &  0 \\
-1 &  0 & -1 & -1 &  0 &  3 &  0 &  0 &  0 &  0 \\
 0 & -1 &  0 &  0 &  0 &  0 &  3 & -1 &  0 & -1 \\
 -1 &  0 &  0 &  0 &  0 &  0 & -1 &  4 & -1 & -1 \\
 0 &  0 &  0 &  0 &  0 &  0 &  0 & -1 &  1 &  0 \\
 0 &  0 &  -1 &  0 &  0 &  0 & -1 & -1 &  0 &  3
\end{bmatrix}
\]

\begin{figure}[!ht]
\begin{center}
\includegraphics[scale=0.45]{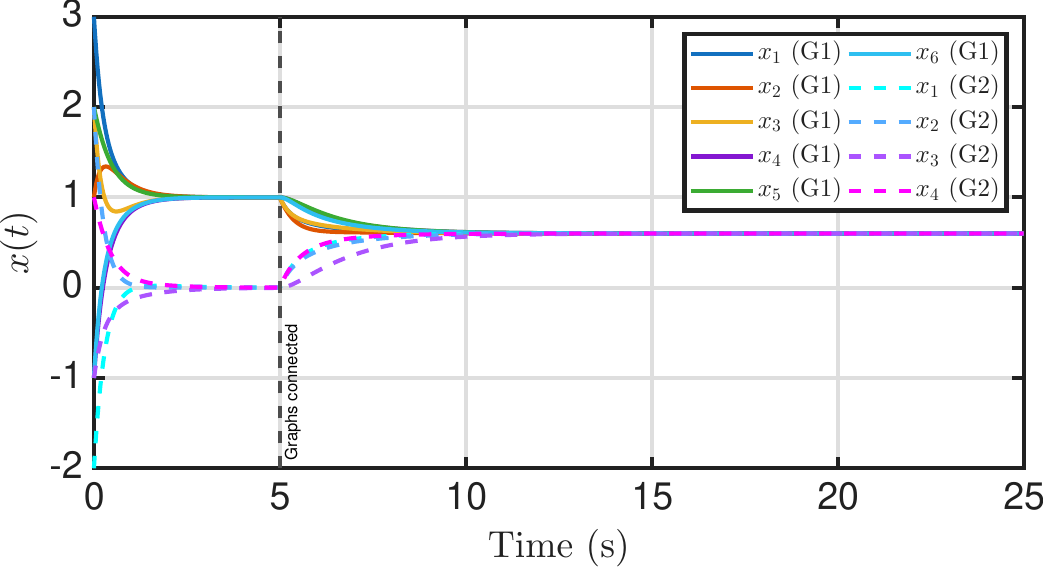}
\vspace{-5pt}
\caption{The Effect of two bridge edges on the convergence rate of graphs.}\label{64_graph_3bridge}
\captionsetup{justification=centering}
\vspace{-5pt}
\end{center}
\end{figure}

Now, the Fiedler eigenvalue of the Laplacian matrix associated with the combined graph, is $\lambda_2 (L_{12}^3) = 0.6614$. This is consistent with the results of Proposition \ref{Prop:Bridge Gluing}, which yields the bound $\lambda_2 (L_{12}^3) \leq 3\times(1/6+1/4)= 1.25$. The consensus time of the multi-agent system communicating with Graph 1, Graph 2, and the combined graph are shown in Fig. \ref{64_graph_3bridge}. It can be seen that the graph with three bridges has a time constant of $\tau=1/0.664=1.5$ seconds which shows faster convergence. These results confirm that increasing the number of interconnecting edges between two graphs increases the Fiedler eigenvalue, thereby leading to faster convergence of the combined network.

\section{CONCLUSION}
In this paper, the effects of graph gluing operations in networks of multi-agent systems and their influence on consensus performance were investigated. In particular, the study examined how interconnecting multiple multi-agent subsystems through communication links modifies the underlying graph topology and alters the spectral properties of the resulting network. Both bridge gluing and interface gluing were analyzed to understand their roles in shaping the connectivity of the combined graph. The analysis focused on how the structure and number of communication links between subsystems affect the Fiedler eigenvalue of the graph Laplacian, which is a key parameter governing the convergence rate and consensus time of the networked system. The results demonstrate that increasing the number of bridging links enhances algebraic connectivity. Overall, the findings establish a clear relationship between interconnection strategies, algebraic connectivity, and consensus performance. The proposed framework offers useful insight for the design and analysis of interconnected multi-agent systems, where appropriate selection of gluing structures can significantly improve convergence behavior. Simulation results were presented to validate the theoretical analysis and to illustrate the impact of different gluing mechanisms on network performance.

\section{ACKNOWLEDGMENTS}
Research was sponsored by the Office of Naval Research and was accomplished under Grant Number W911NF-24-1-0293. The views and conclusions contained in this document are those of the authors and should not be interpreted as representing the official policies, either expressed or implied, of the Office of Naval Research or the U.S. Government. The U.S. Government is authorized to reproduce and distribute reprints for Government purposes notwithstanding any copyright notation herein.
\bibliographystyle{ieeetr}

\bibliography{reference}

%\addtolength{\textheight}{-3cm}

\begin{comment}
Therefore, parameter $a$ must be selected such that
\begin{equation}\label{eq:A12}
    a\leqslant 2 \lambda_{min}\left(L_{nr \times nr}\right)
\end{equation}
Now we investigate the stability of the system in two cases:

\emph{Case1.} In this case we suppose that there is no attack on agents, i.e. $f_{nr}=0$. From \eqref{eq:A9}, the triggering parameter $\eta_i$ needs to be designed using the following condition
\begin{equation}\label{eq:A13}
    \eta_{i,max} < \sqrt{\frac{2a\lambda_{min}\left(L_{nr \times nr}\right))-a^2}{\lambda_{max}\left( {L_{nr \times nr}}^T L_{nr \times nr}\right)}}
\end{equation}
Enforcing condition \eqref{eq:A13}, one has $\alpha > \beta$ and hence $\dot V \leq 0$.
\end{comment}
\end{document}